\let\subparagraph\paragraph
\titlespacing\section{1pt}{6pt plus 0pt minus 1pt}{6pt plus 0pt minus 2pt}
\titlespacing\subsection{1pt}{6pt plus 0pt minus 1pt}{2pt plus 0pt minus 2pt}
\titlespacing\subsubsection{1pt}{6pt plus 0pt minus 1pt}{2pt plus 0pt minus 2pt}
\DeclareMathOperator*{\argmax}{arg\,max}
\newcommand{\R}{\ensuremath{\mathbb{R}}} 
\newcommand{\angled}[1]{\left\langle#1\right\rangle}
\newcommand{\sqBrackets}[1]{\left[#1\right]}
\newcommand{\task}[0]{\ensuremath{a}}
\newcommand{\allocation}[0]{\ensuremath{\textbf{A}}}  
\newcommand{\allocationSpace}[0]{\ensuremath{\boldsymbol{\mathcal{A}}}}
\newcommand{\taTaskNetwork}[0]{\ensuremath{T}}
\newcommand{\setInitialConfigurations}[0]{\ensuremath{I_c}}
\newcommand{\setInitialTerminalConfigurationsTaskNetwork}[0]{\ensuremath{L_\taTaskNetwork}}
\newcommand{\schedule}[1][none]{\possibleBold{#1}{\ensuremath{\sigma}}}
\newcommand{\scheduleSolution}[0]{\hat{\schedule}}
\newcommand{\worldState}[0]{\ensuremath{W}}
\newcommand{\allocationSolution}[0]{\ensuremath{\hat{\allocation}}}
\newcommand{\mpSolution}[0]{\ensuremath{\hat{\setMotionPlans}}}
\newcommand{\solutionNode}{\hat{N}}
\newcommand{\possibleBold}[2]{%
    \ensuremath{%
        \ifthenelse{\equal{#1}{bold}}%
        {\boldsymbol{#2}}%
        {#2}%
    }%
}
\newcommand{\performanceFunction}[1][none]{\possibleBold{#1}{\xi}}
\newcommand{\performanceFunctionSet}[1][none]{\possibleBold{#1}{\Xi}}
\newcommand{\maxMakespan}[1][none]{\possibleBold{#1}{\ensuremath{C}_{max}}}
\newcommand{\taskNetwork}[1][none]{\possibleBold{#1}{\mathcal{T}}}
\newcommand{\robotTraitMatrix}[1][none]{\possibleBold{#1}{Q}}
\newcommand{\numRobots}[1][none]{\possibleBold{#1}{N}}
\newcommand{\numTasks}[1][none]{\possibleBold{#1}{M}}
\newcommand{\problemDomainStatic}[1][none]{\possibleBold{#1}{\mathcal{D}}}
\newcommand{\solutionStatic}[1][none]{\possibleBold{#1}{S}}
\newcommand{\setMotionPlans}[1][none]{\possibleBold{#1}{\ensuremath{X}}}
\newcommand{\allocationOptimal}[0]{\ensuremath{\allocation^*}}
\newcommand{\algoname}[0]{Q-ITAGS}
\begin{document}
\mainmatter              
\title{{Q-ITAGS}: Quality-Optimized Spatio-Temporal Heterogeneous Task Allocation with a Time Budget
\vspace{-8pt}
}
\titlerunning{Q-ITAGS}  
%
\author{Glen Neville\inst{1}$^\dagger$ \and Jiazhen Liu\inst{1}$^\dagger$ \and
Sonia Chernova\inst{1} \and Harish Ravichandar\inst{1} \\ $^\dagger$ Equal contribution}
\authorrunning{Glen Neville et al.} 
%
%
\tocauthor{Glen Neville, Jiazhen Liu, Sonia Chernova, Harish Ravichandar}
\institute{Georgia Institute of Technology, Atlanta GA 30332, USA,\\
\email{gneville, jliu3103, Chernova, harish.ravichandar@gatech.edu}
}

\maketitle              

\vspace{-10pt}
\begin{abstract}
Complex multi-objective missions require the coordination of heterogeneous robots at multiple inter-connected levels, such as coalition formation, scheduling, and motion planning. 
The associated challenges are exacerbated when solutions to these interconnected problems need to simultaneously maximize task performance and respect practical constraints on time and resources. 
In this work, we formulate a new class of spatio-temporal heterogeneous task allocation problems that formalize these complexities.
We then contribute a novel framework, named Quality-Optimized Incremental Task Allocation Graph Search (\algoname), to solve such problems. 
\algoname\ 
offers a flexible \textit{interleaved} framework that 
i) explicitly models and optimizes the effect of collective capabilities on task performance via learnable \textit{trait-quality maps}, and
ii) respects both resource and spatio-temporal constraints, including a user-specified \textit{time budget} (i.e., maximum makespan).
In addition to algorithmic contributions, we derive theoretical \textit{suboptimality bounds} in terms of task performance that varies as a function of a single hyperparameter. 
Detailed experiments involving a simulated emergency response task and a real-world video game dataset reveal that i) \algoname\ results in superior team performance compared to a state-of-the-art method, while also respecting complex spatio-temporal and resource constraints, ii) \algoname\ efficiently learns trait-quality maps to enable effective trade-off between task performance and resource constraints, and iii) \algoname' suboptimality bounds consistently hold in practice. 

\keywords{multi-robot systems, heterogeneity, task allocation}

\end{abstract}

\vspace{-8pt}
\section{Introduction}

Heterogeneous multi-robot systems (MRS) bring together complementary capabilities to tackle challenges in domains as diverse as agriculture \cite{Roldan2016}, defense \cite{McCook2007}, assembly \cite{Stroupe2005}, and warehouse automation \cite{Baras2019}.  Achieving effective teaming in such complex domains requires reasoning about task allocation (\textit{who})~\cite{Ravichandar2020}, scheduling (\textit{when})~\cite{Matos2021}, motion planning (\textit{how})~\cite{Baras2019}, and their intersections~\cite{Neville2021,Messing2022}. 

In this work, we tackle the challenge of \textit{spatio-temporal task allocation} for heterogeneous teams of robots, which requires tackling task allocation, scheduling, and motion planning \textit{simultaneously}. Specifically, we focus on relaxing three key limiting assumptions of related methods in the current literature. 

First, several methods assume that one can either \textit{decompose multi-robot tasks} into individual robot subtasks~\cite{Giordani2013,Krizmancic2020} or \textit{enumerate} all possible coalitions~\cite{gombolay2016,Schillinger2018}. However, it is often difficult, if not impossible, to explicitly specify the role of each robot in tasks that involve complex collaboration, and enumerating every capable coalition might be intractable. 
Second, existing methods tend to assume a \textit{binary success-or-failure model} of task outcomes~\cite{fu2022robust,Messing2022,srikanthan2022resource}. Such a limited model does not apply to several real-world problems which require maximization of performance metrics, as opposed to satisfaction of arbitrary thresholds (e.g., distributed sensing, supply chain optimization, and disaster response).
Third, existing methods often demand that users \textit{explicitly specify} the requirements for successful task completion~\cite{Prorok2017,Ravichandar2020,Mayya2021}. However, it is well known that humans, while adept at making complex decisions, often fail to precisely articulate how they do so; in fact, we tend to make things up to justify our decisions~\cite{rieskamp2003people,rieskamp2006ssl}. 
We refer the reader to Sec.~\ref{sec:related_work} for a detailed discussion of related works.

\begin{figure}[t]
\begin{center}
    \includegraphics[width=0.8\columnwidth]{./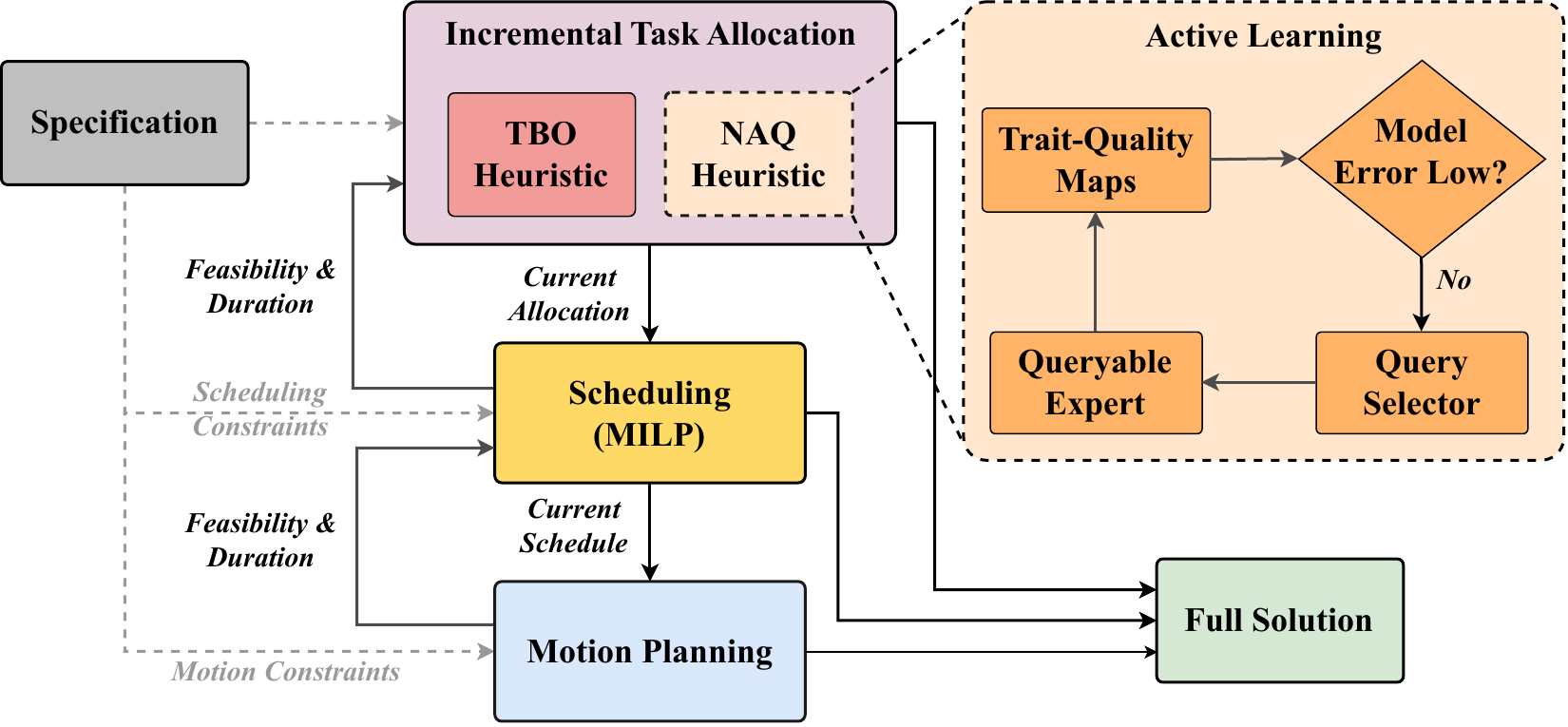}
\end{center}
\caption{
\small{
\algoname{} performs spatio-temporal task allocation for heterogeneous multi-robot teams by optimizing collective performance while respecting spatio-temporal and resource constraints. It explicitly models, actively learns, and optimizes \textit{trait-quality maps} that approximate the effects of collective capabilities on task performance.
}}
\label{fig:block_diagram} 
\end{figure}

We contribute \textit{Quality-optimized Incremental Task Allocation Graph Search (\algoname)}\footnote{Open-source implementation: \url{https://github.com/GT-STAR-Lab/Q-ITAGS}}, 
a heuristic-driven interleaved search algorithm that sheds critical assumptions of existing methods and overcomes the corresponding limitations (see Fig.~\ref{fig:block_diagram}). 
First, improving upon our prior work in \textit{trait-based} task allocation~\cite{Ravichandar2020,Neville2021,Messing2022,srikanthan2022resource,neville-2023-d-itags}, \algoname{} models both agents and tasks in terms of capabilities and task requirements in terms of collective capabilities, leading to a \textit{flexible} and \textit{robot-agnostic} framework. As such, \algoname{} neither requires decomposition of multi-robot tasks nor the enumeration of all possible coalitions.
Second, \algoname{} utilizes \textit{continuous trait-quality maps}, a novel and expressive model of performance that maps collective capabilities to task performance, helping encode and optimize the quality of allocations.
Third, \algoname{} does not demand explicit specification of task requirements. Instead, it employs an \textit{active learning} module to learn how collective capabilities relate to task performance.


We formalize a new class of problems, dubbed \textit{spatio-temporal task allocation with quality optimization (STAQ)}, which requires maximizing the quality of allocations (i.e., task performance) while simultaneously ensuring that the makespan of the associated schedule is less than a user specified threshold (i.e., time budget).
To solve STAQ problems, we then develop two heuristics that guide the search within \algoname{}. Specifically, we develop i) \textit{Time Budget Overrun (TBO)}, which captures the suboptimality of a given schedule anchored against a user-specified threshold on makespan, and ii) \textit{Normalized Allocation Quality (NAQ)}, which provides a normalized estimate of the current allocation's quality. 

We extensively evaluated \algoname{} in terms of its ability to i) optimize task allocation quality, ii) respect time budgets, and iii) learn unknown mappings between collective capabilities and task performance. We conducted our evaluations on both a simulated emergency response domain and a publicly-available dataset of a multi-player game. Our results conclusively demonstrate that \algoname{} outperforms the state-of-the-art in terms of allocation quality while generating schedules of a similar makespan. Further, \algoname' active sampling approach learns unknown task quality mappings consistently more efficiently than a passive approach that learns from uniformly sampled data.


In addition to empirical benefits, we provide theoretical insights into \algoname' operation. We derive a bound on \algoname' sub-optimality in terms of allocation quality under mild assumptions. 
Notably, our analysis illuminates an inherent trade-off between allocation quality and schedule makespan that hinges on a single hyperparameter. This insight and analysis will provide users with intuitive guidance about the trade-off when choosing the hyperparameter.
We also demonstrate via experiments that our bounds consistently hold in practice.

In summary, we contribute: i) a formalism for a new class of problems in spatio-temporal task allocation that focus on optimizing allocation quality while respecting makespan constraints, ii) an interleaved search algorithm and associated heuristics to effectively solve such problems, iii) an active learning method to learn unknown relationships between collective robot capabilities and task performance, and iv) theoretical bounds on suboptimality in allocation quality.


\section{Related Works}
\label{sec:related_work}

While the multi-robot task allocation problem has many variants~\cite{Gerkey2004,Nunes2017}, we limit our discussion to one variant to which problem is closely related: single-task (ST) robots, multi-robot (MR) tasks, and time-extended (TA) allocation. ST-MR-TA problems require assigning coalitions to tasks while respecting temporal constraints (e.g., precedence and ordering), spatio-temporal constraints (e.g., travel time), and deadlines. In addition to these issues, our problem formulation also involves spatial constraints (e.g., task locations and obstacles).

\textit{Auction-based methods} to solve ST-MR-TA problems involve auctioning tasks to robots through a bidding process based on a utility function that combines the robot's (or the coalition's) ability to perform the task with any temporal constraints \cite{Giordani2013,Krizmancic2020,quinton2023market}. Auctions are highly effective, but typically require each robot to know its individual utility towards a task. As a result, these methods require either 
i) multi-robot tasks to be decomposable into sub-tasks, each solvable by a single robot, 
or ii) robots within a coalition to have explicit knowledge of their individual contribution to all collaborative tasks.

\textit{Optimization-based methods} formulate the ST-MR-TA problem as a mixed-integer linear program (MILP) to optimize the overall makespan or a utility function~\cite{Korsah2012,Guerrero2017,chakraa2023optimization}.
However, these methods often require that all tasks be decomposable into single-agent tasks \cite{Schillinger2018}. In contrast to auction-based and optimization-based approaches, our approach does not require task decomposition.

Our approach to task allocation is most closely related to \textit{trait-based methods}~\cite{Koes2006,Prorok2017,Ravichandar2020,Neville2020,Neville2021,srikanthan2022resource,fu2022robust,fu2022learning,Singh-RSS-23,neville-2023-d-itags,10341837}, which utilizes a flexible modeling framework that encodes task requirements in terms of traits (e.g., Task 1 requires traveling at 10m/s while transporting a 50-lb payload). Each task is not limited to a specific set or number of agents. Instead, the focus is on finding a coalition of agents that \textit{collectively} possess the required capabilities. As such, these methods allow for flexible and automated determination of coalition sizes and compositions.  However, most existing trait-based approaches are limited to single-task robot, multi-robot task, instantaneous allocation (ST-MR-IA) problems that do not require scheduling~\cite{Koes2006,Prorok2017,Ravichandar2020,Neville2020,srikanthan2022resource,Singh-RSS-23}, with a few notable exceptions that can handle ST-MR-TA problems~\cite{Neville2021,neville-2023-d-itags,fu2022robust,fu2022learning}.  


A key limitation of existing trait-based algorithms, including~\cite{Neville2021,neville-2023-d-itags,fu2022robust,10341837}, is that they assume that the user will explicitly quantify the minimum amount of collective traits necessary to perform each task. 
This presents two issues.
First, explicit quantification of minimum requirements can be intractable even for experts when dealing with complex tasks and multi-dimensional capabilities~\cite{rieskamp2003people}. 
Other sub-fields within robotics recognize this concern and have developed methods to prevent and handle reward or utility mis-specification~\cite{Menell2017,Mallozzi2018}.
Second, all methods, including those that learn requirements~\cite{fu2022learning} assume that any and all additional traits allocated beyond the minimum provide no benefit to the team and that not satisfying the specified threshold will lead to complete failure. This effectively ignores the natural variation in task performance as a function of allocated capabilities.
In stark contrast, \algoname{} does not require the user to specify trait requirements and utilizes a more expressive model to capture the relationship between allocated capabilities and task performance.

\section{Problem Formulation}
\label{sec:problemdef}

We begin with preliminaries from our prior work~\cite{Neville2021,Messing2022,neville-2023-d-itags} for context, and then formalize a new class of spatio-temporal task allocation problems.

\subsection{Preliminaries}


Consider a team of \numRobots\ heterogeneous robots, with the $i$th robot's capabilities described by a collection of traits $q^{(i)} = \sqBrackets{q_1^{(i)},\ q_2^{(i)},\ \cdots, q_U^{(i)}}^\intercal \in \mathbb{R}_{\geq 0}^U$, where $q_u^{(i)}$ corresponds to the $u^{th}$ trait for the $i^{th}$ robot. We assign $q_u^{(i)} = 0$ when the $i^{th}$ robot does not possess the $u^{th}$ trait (e.g. firetrucks have a water capacity, but other robots may not).
As such, the capabilities of the team can be defined by a \textbf{team trait matrix}:
$$
  \robotTraitMatrix[bold]  = \sqBrackets{q^{(1)},\ \cdots,\ q^{(N)}}^{\intercal} \in \R_{\geq 0}^{N \times U}
$$ 
whose $iu$th element denotes the $u^{th}$ trait of the $i^{th}$ robot.

We model the set of $M$ tasks to be completed as a \textbf{Task Network} \taskNetwork[bold]: a directed graph $G=(\mathcal{V}, \mathcal{E})$, with vertices $\mathcal{V}$ representing a set of tasks $\{\task_m\}_{m=1}^{M}$, and edges $\mathcal{E}$ representing constraints between tasks. For a pair of tasks $\task_i$ and $\task_j$ with $a_i \neq a_j$, we consider two kinds of constraints: (i) \textbf{precedence constraint} $\task_i \prec \task_j$ requires that Task $\task_i$ should be completed before the Task $\task_j$ can begin (e.g., a fire must be put out before repairs can begin) \cite{Weld1994}; (ii) \textbf{mutex constraint} ensures that $\task_i$ and $ \task_j$ are not concurrent (e.g. a manipulator cannot pick up two objects simultaneously) \cite{Bhargava2019}. Further, we define \maxMakespan~as the \textbf{total time budget} which encodes the longest makespan acceptable to the user.

Let $\allocation \in \allocationSpace \subseteq \{0,1\}^{M \times N}$ denote the binary \textbf{allocation matrix}, where $\allocation_{mn} = 1$ if and only if the $n$th robot is allocated to the $m$th task and $\allocation_{mn} = 0$ otherwise. Robots can complete tasks individually or collaborate as a coalition, and any robot can be allocated to more than one task as long as the tasks are scheduled to be carried out during non-overlapping intervals. We use $\allocation_m$ to denote the $m$th row of $\allocation$ and it specifies the coalition of robots assigned to the $m$th task. We further define 
$\mathbf{Y} = \allocation \robotTraitMatrix[bold] \in \R^{M\times U}_{\geq 0}$ to be the aggregated traits for all tasks according to allocation plan $\allocation $, with $y_m$ denoting its $m$th row containing the collective traits available for the $m$th task. 

\subsection{Quality-Optimized Spatio-Temporal Task Allocation}
In this work, we extend the problem formulation from our prior work to account for the quality of task allocation. We formulate a new class of problems, \textit{spatio-temporal task allocation with quality optimization (STAQ)}, which involves optimizing the quality of robots' assignments to tasks while ensuring that the associated makespan is less than a user-specified threshold. We then extend STAQ to include \textit{active learning} of unknown trait-quality maps, which map the collective capabilities of a given coalition to task performance.

Let $\performanceFunction_m: \R^{U}_{\geq 0} \rightarrow [0,1]$ be the normalized \textbf{trait-quality map} that computes a non-negative quality score (with numbers closer to 1 indicating higher quality coalitions) associated with the $m^{th}$ task given the collective traits allocated to it. In essence, $\performanceFunction_m$ quantifies how a given coalition's capabilities translate to performance on the $m^{th}$ task. We define the \textbf{total allocation quality} to summarize the performance of all tasks:
\begin{equation}
    \performanceFunctionSet(\allocation) = \sum_{m=1}^{M} \performanceFunction_m(\robotTraitMatrix[bold]^T\allocation_m^T) = \sum_{m=1}^{M}  \performanceFunction_m(y_m) 
    \label{eq:total_allocation_quality}
\end{equation}
Note that the maps $\performanceFunction_m, \forall m$ are between \textit{traits} (not robots) and performance, yielding more generalizable relationships. We also discuss the active learning of trait-quality maps in Sec.~\ref{sec:active_learning_staq} when they are unknown. 


We define the \textbf{problem domain} using the tuple $\problemDomainStatic[bold] = \big<\taskNetwork[bold],\ \robotTraitMatrix[bold],\ \performanceFunctionSet,\ \setInitialConfigurations,\ \setInitialTerminalConfigurationsTaskNetwork, \\ \worldState{}, \maxMakespan \big>$,
where
\taskNetwork[bold]\ is the task network, \robotTraitMatrix[bold]\ is the team trait matrix, \performanceFunctionSet\ is the summarized performance function, $\setInitialConfigurations$ and $\setInitialTerminalConfigurationsTaskNetwork$ are respectively the sets of all initial and terminal configurations associated with tasks, \worldState{} is a description of the world (e.g., a map), and \maxMakespan~is the total time budget.

Finally, we define a \textbf{solution} to the problem defined by $\problemDomainStatic[bold]$ using the tuple $\boldsymbol{\solutionStatic} = \angled{\allocationSolution,\ \scheduleSolution,\ \mpSolution}$, where $\allocationSolution$\ is an allocation, $\scheduleSolution$\ is a schedule that respects all temporal constraints, and $\mpSolution$\ is a finite set of collision-free motion plans. The goal of STAQ is to find a solution that maximizes the total allocation quality $\performanceFunctionSet(\allocation)$ while respecting the total time budget $\maxMakespan$.



\subsection{Active Learning for STAQ}
\label{sec:active_learning_staq}
As above, $\performanceFunction_m(y_m)$ is a function that maps the collective traits assigned to the $m$th task to a measure of task performance. While accurate knowledge of these maps is needed to make effective trade-offs when allocating limited resources (i.e., robots) to competing objectives (i.e., tasks), they are often difficult to explicitly specify.
As such, we do not assume knowledge of $\performanceFunction_m(\cdot),\ \forall m$. Instead, we assume access to a simulator from which to learn. We are specifically interested in limiting the number of queries made to facilitate learning since such queries can be expensive or time-consuming. As such, we formulate the problem of learning $\performanceFunction_m(\cdot),\ \forall m$ as an active learning problem in which one must effectively sample the most informative coalitions to efficiently learn the trait-quality maps. 


\vspace{-0.2cm}
\section{Approach}

To solve the STAQ problems as defined in Sec. \ref{sec:problemdef}, we propose \textbf{Quality-optimized Iterative Task Allocation Graph Search (\algoname)} algorithm. We begin with a brief overview of \algoname{} before supplying details. 

\algoname{} utilizes an interleaved architecture to simultaneously solve task allocation, scheduling, and motion planning (see Fig.~\ref{fig:block_diagram}). This interleaved approach is inherited from our prior work~\cite{Neville2021,Messing2022,neville-2023-d-itags}, in which we provide extensive evidence for its benefits over sequential counterparts that are commonly found in the literature. For instance, interleaved architectures are considerably more computationally efficient since they avoid backtracking~\cite{Messing2022}.
However, \algoname{} makes crucial improvements over our prior work: i) optimizing for allocation quality instead of makespan, ii) utilizing a more expressive model of task performance, and iii) respecting a time budget (see Sec.~\ref{sec:related_work}).

\algoname{} performs an incremental graph-based search that is guided by a novel heuristic. Our heuristic balances the search between optimizing task allocation quality and meeting the time budget requirement. We formulate and solve a mixed-integer linear program (MILP) to address the scheduling and motion planning problems as part of the incremental search. To get around the challenge of explicitly specifying task requirements, \algoname{} employs an \textit{active learning} module to learn the relationship between the collective traits of the coalition and task performance. 
When a very tight time budget $C_{max}$ is imposed, it might be possible that no feasible solution exists no matter how the robots are allocated. For such cases, \algoname{} can signal infeasibility after it exhausts all combinations through the search without finding a solution.

\subsection{Task Allocation}\label{subsec:allocation} 
The task allocation layer performs a greedy best-first search through an incremental task allocation graph $\mathcal{N}$. In this graph, each node represents an allocation of robots to tasks. Nodes are connected to other nodes that differ only by the assignment of a single robot. 
Note that the root node represents all robots allocated to each task.  Indeed, such an allocation will result in the best performance (since all available robots contribute to each task), but will result in the longest possible schedule (sequential execution of tasks). \algoname' search starts from this initial node, and incrementally removes assignments to find an allocation that can meet the time budget without significantly sacrificing task performance.

To guide the search, we developed two heuristics: \textit{Normalized Allocation Quality (NAQ)}, which guides the search based on the quality of the allocation, and \textit{Time Budget Overrun (TBO)}, which guides the search based on the makespan of the schedule associated with the allocation. 

\noindent \textbf{\textit{Normalized Allocation Quality (NAQ)}} is a normalized measure of how a given allocation improves task performance, and is calculated as
\begin{equation}
    \label{equ:naq}
    f_{NAQ}(\overline{\allocation})  = \frac{\performanceFunctionSet(\allocation_\text{root}) - \performanceFunctionSet(\overline{\allocation})}{\performanceFunctionSet(\allocation_\text{root}) - \performanceFunctionSet(\allocation_\text{null})}    
\end{equation}
where $\overline{\allocation}$ is the allocation being evaluated, $\performanceFunctionSet(\overline{\allocation})$ is the total allocation quality as defined in Eq.~(\ref{eq:total_allocation_quality}), $\allocation_\text{root}$ is the allocation at the root node, and $\allocation_\text{null}$ represents no robots being assigned to any task. Since $\performanceFunctionSet(\allocation_\text{root})$ and $\performanceFunctionSet(\allocation_\text{null})$ respectively define the upper and lower bounds of allocation quality, NAQ is bounded within $[0,1]$. Note that the quality functions $\{\performanceFunction_m(\cdot)\}_{m=1}^{M}$, needed to compute $\performanceFunctionSet(\cdot)$, would be learned as described in Sec.~\ref{subsec:active_learning}.

NAQ does not consider the scheduling layer and promotes a broader search of the allocation graph. This is due to the fact that shallower nodes have more robots allocated and as a result tend to result in higher allocation quality. As such, NAQ favors allocations that maximize allocation quality and task performance at the expense of a potentially longer makespan.

\vspace{3pt}
\noindent \textbf{\textit{Time Budget Overrun (TBO)}} measures how much the schedule associated with a given allocation violates the time budget, and is calculated as 
\begin{equation}
    \label{equ:pos}
    f_{TBO}(C_{\overline{\schedule[bold]}}) = \max \left( \frac{C_{\overline{\schedule[bold]}} -  C_{max}}{\vert C_{\schedule[bold]_{worst}} -  C_{max}\vert},\ 0 \right)    
\end{equation}
where $C_{\schedule[bold]}$ denotes the makespan of the schedule $\schedule[bold]$, $\overline{\schedule[bold]}$ is the schedule associated with the $\overline{\allocation}$ being evaluated, $C_{max}$ is the user-specified time budget for makespan, and $\schedule[bold]_{worst}$ is the longest schedule which allocates all robots to each task.

Since TBO only considers the schedule and not the allocation, it tends to favor nodes deeper in the graph that have fewer agents and constraints and thus a lower makespan. 
As such, TBO favors allocations that satisfy the time budget at the expense of a deeper search and more node expansions.

\vspace{3pt}
\noindent \textbf{\textit{Time-Extended Task Allocation Metric (TETAM)}} is a convex combination of NAQ and TBO, balancing allocation quality and time budget:
\begin{equation}
    \label{equ:TETAM}
    f_{TETAM}(\overline{\allocation}, C_{\overline{\schedule[bold]}})= (1-\alpha)f_{NAQ}(\overline{\allocation})  + \alpha f_{TBO}(C_{\overline{\schedule[bold]}})
\end{equation}
where $\alpha \in [0,1]$ is a user-specified parameter that controls each heuristic's relative influence. 
Thus, TETAM considers both allocation quality and the schedule simultaneously when searching for a solution. See Sec.~\ref{sec:theorems} for a theoretical analysis of the trade-offs between allocation quality and makespan.

\subsection{Scheduling and Motion Planning} \label{subsec:scheduling_and_MP} 
\algoname{}' scheduling layer checks the feasibility of scheduling a given allocation and helps compute its TBO. To this end, we formulate a mixed-integer linear program (MILP) that considers three types of temporal constraints: precedence, mutex, and travel time. Precedence constraints $\mathcal{P}$ ensure that one task happens before another (e.g., the fire must be doused before repairs begin). Mutex constraints $\mathcal{M}$ ensure that two tasks do not happen simultaneously (e.g., a robot cannot pick up two objects simultaneously). Travel time constraints ensure that robots have sufficient time to travel between task sites (e.g., traveling to the location of fire before dousing). The MILP takes the following form:

\begin{equation*}
\begin{aligned}
    \min_{\{s_i\}, \{ p_{ij}\}}  &\ C \\
    \mathrm{s.t.}    &\ C \geq s_i + d_i,\ \forall i=1,..,\numTasks \\
                     &\ s_j \geq s_i + d_i + x_{ij},\ \forall (i,j) \in \mathcal{P} \\
                     &\ s_i \geq x_i,\ \forall i=1,..,\numTasks \\
                     &\ s_i \geq s_j + d_j + x_{ji} - \beta p_{ij},\ \forall (i,j) \in \mathcal{M}^R \\
                     &\ s_j \geq s_i + d_i + x_{ij} - \beta(1-p_{ij}),\ \forall (i,j) \in \mathcal{M}^R
\end{aligned}    
\end{equation*}
where $C$ is the makespan, $s_i$ and $d_i$ are the (relative) starting time and duration of Task $\task_i$, $x_{ij}$ is the time required to translate from the site of $\task_i$ to the site of $\task_j$, $x_i$ is the initial time required for the allocated coalition to travel to the site of Task $a_i$, $p_{ij} \in \{ 0, 1\}$, and $p_{ij} = 1$ if and only if $\task_i$ precedes $\task_j$, $\beta \in \mathbb{R}_+$ is a large scalar, and $\mathcal{P}$ and $\mathcal{M}$ are sets of integer pairs containing the lists of precedence and mutex constraints, with $\mathcal{M}^R = \mathcal{M} - \mathcal{P} \cap \mathcal{M}$ denoting mutex constraints with precedence constraints removed.

\algoname{} initializes the schedule by estimating travel times using Euclidean distances. During the search, the scheduling layer iteratively queries the motion planner to better estimate travel times. \algoname{} iterates until all motion plans required by the schedule are instantiated and memoized.

\subsection{Active Learning}\label{subsec:active_learning} 

Below, we discuss how \algoname{} models the trait-quality maps $\{\performanceFunction_m(\cdot)\}_{m=1}^{M}$, and learns them from actively sampled data. To facilitate learning, we utilize a simulator that can estimate the task performance of any given coalition. 
Note that when a simulator is not available, a human expert can be queried for labels. 
\algoname' active learning module has two key components: 

\vspace{4pt}
\noindent \textit{Trait-Quality Maps}:
As discussed above, trait-quality maps specify the relationship between collective capabilities and task performance. Formally, we model the joint distribution of collective traits assigned to a task and the resulting task performance as a Gaussian Process with a radial basis function kernel: $P\left(y_m,\performanceFunction_m(y_m)\right) \sim \mathcal{GP}(\mu_m,\Sigma_m), \forall m$. As such, we can predict the quality of a given collection of traits using Gaussian Process Regression. Note that this design choice allows \algoname{} to not just estimate quality, but also to quantify its uncertainty, which in turn helps in active learning. 


\vspace{4pt}
\noindent \textit{Query Selector}:
To ensure efficient learning, \algoname' query selector samples the most-informative data points from a collection of unlabeled data points (i.e., coalitions that have not been sampled). \algoname{} employs a maximum-entropy sampling approach that computes the uncertainty of the current trait-quality map's quality estimate for each unlabeled data point, and then selects the data point where the uncertainty is the highest. Note that we do not use upper confidence bound (UCB) sampling since, during learning, we are more interested in coverage rather than in optimizing task performance. As such, our maximum entropy approach chooses samples in regions where the current model is the most ignorant, building a more comprehensive model that will inform trade-off decisions when allocating limited resources (i.e., robots) to competing tasks. Further, experiments reveal that our active learning approach results in better sample efficiency and more consistent learning, compared to a passive approach which employs uniform sampling (see Sec. \ref{subsec:nfl_madden}).


\section{Bounds on Suboptimality}
\label{sec:theorems}

To better understand \algoname' performance, we analyze the effect of $\alpha$ -- the user-specified parameter that determines the relative importance of our two heuristics -- on the optimality of the obtained solution defined with respect to the quality of task allocation. We demonstrate that the choice of $\alpha$ determines a suboptimality bound, where $\alpha$ values between $\alpha=0$ and $\alpha=0.5$ promise increasingly tighter bounds on suboptimality. This analysis and the fact that $\alpha$ values closer to 1 provide solutions that prioritize makespan can offer an intuitive understanding of the relationship between allocation quality and makespan.

Below, we derive strict suboptimality bounds for solutions generated by \algoname{}, in terms of total allocation quality.

\begin{theorem}\label{thm:makespan}
    For a given problem domain \problemDomainStatic[bold], let $\allocationOptimal$ be the optimal allocation w.r.t. total allocation quality under a time budget, and $\allocationSolution$ be the allocation of the solution generated by \algoname. If $\alpha < 0.5$ in Eq. (\ref{equ:TETAM}), then
    \begin{equation}
    \label{equ:boundsPre}
         \performanceFunctionSet(\allocationOptimal)- \performanceFunctionSet(\allocationSolution) \leq \frac{\alpha}{1-\alpha} (\performanceFunctionSet(\allocation_\text{root})- \performanceFunctionSet(\allocation_\text{null})),
    \end{equation}
    where $\performanceFunctionSet(\allocation_\text{root})$ and $\performanceFunctionSet(\allocation_\text{null})$ denote the total allocation quality respectively when all robots are assigned to all tasks, and when no robot is assigned to any task.
    \label{thm1}
\end{theorem}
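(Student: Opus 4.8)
The plan is to reduce the quality bound to a one-line inequality on the NAQ heuristic and then invoke the ordering guarantee of greedy best-first search. First I would observe that, by definition, both \allocationSolution\ and \allocationOptimal\ are \emph{feasible}: their associated schedules respect the budget, so their makespans are at most \maxMakespan\ and hence $f_{TBO}=0$ for each, giving $f_{TETAM}=(1-\alpha)f_{NAQ}$ at both nodes. Rewriting the left-hand side of Eq.~(\ref{equ:boundsPre}) through the definition of NAQ in Eq.~(\ref{equ:naq}) yields the exact identity $\performanceFunctionSet(\allocationOptimal)-\performanceFunctionSet(\allocationSolution) = \big(\performanceFunctionSet(\allocation_\text{root})-\performanceFunctionSet(\allocation_\text{null})\big)\big(f_{NAQ}(\allocationSolution)-f_{NAQ}(\allocationOptimal)\big)$, so it suffices to show $f_{NAQ}(\allocationSolution)-f_{NAQ}(\allocationOptimal)\le \alpha/(1-\alpha)$.

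Next I would exploit the structure of the search. \algoname\ runs greedy best-first search on $f_{TETAM}$ and returns the first feasible (goal) node it pops. If $\allocationSolution=\allocationOptimal$ the claim is trivial, so assume otherwise; then at the instant \allocationSolution\ is popped, \allocationOptimal\ cannot yet have been expanded, since being feasible it would otherwise have been returned first. By the standard best-first frontier lemma, some ancestor $\allocation'$ of \allocationOptimal\ in the search tree lies on the open list at that instant, and because \allocationSolution\ is popped as the minimizer we obtain $f_{TETAM}(\allocationSolution)\le f_{TETAM}(\allocation')$.

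It then remains to bound $\allocation'$. Because the search expands nodes only by \emph{removing} robot assignments, any ancestor $\allocation'$ of \allocationOptimal\ allocates a superset of the robots that \allocationOptimal\ does; under the mild monotonicity assumption that each trait-quality map $\performanceFunction_m$ is non-decreasing in the allocated traits, this gives $\performanceFunctionSet(\allocation')\ge\performanceFunctionSet(\allocationOptimal)$, i.e. $f_{NAQ}(\allocation')\le f_{NAQ}(\allocationOptimal)$. Moreover $f_{TBO}(\allocation')\le 1$, since $\schedule[bold]_{worst}$ (the root schedule) is the longest possible, so every node's makespan is at most $C_{\schedule[bold]_{worst}}$. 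Chaining these with the search inequality, $(1-\alpha)f_{NAQ}(\allocationSolution)=f_{TETAM}(\allocationSolution)\le(1-\alpha)f_{NAQ}(\allocation')+\alpha f_{TBO}(\allocation')\le(1-\alpha)f_{NAQ}(\allocationOptimal)+\alpha$; dividing by $1-\alpha>0$ gives the required NAQ bound, and multiplying back by the quality range recovers Eq.~(\ref{equ:boundsPre}).

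I expect the main obstacle to be pinning down the two modeling facts the argument leans on: (i) the monotonicity of $\performanceFunction_m$ in the allocated traits, without which an ancestor with more robots need not have higher quality and the step $f_{NAQ}(\allocation')\le f_{NAQ}(\allocationOptimal)$ breaks; and (ii) the frontier/completeness claim that an ancestor of \allocationOptimal\ is always present on the open list before \allocationOptimal\ is expanded, together with the convention that \algoname\ returns the first feasible node popped. A secondary point I would state explicitly is why $\alpha<0.5$ is needed: it forces $\alpha/(1-\alpha)<1$, so the derived gap is strictly smaller than the trivial bound $\performanceFunctionSet(\allocation_\text{root})-\performanceFunctionSet(\allocation_\text{null})$, making the guarantee meaningful rather than vacuous.
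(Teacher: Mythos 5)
Your proof is correct and takes essentially the same route as the paper's: both arguments locate a node on the open list whose quality dominates that of $\allocationOptimal$ (you via an explicit ancestor $\allocation'$ and the frontier lemma, the paper via a min-NAQ comparison between its opened and unopened sets), then combine the greedy pop ordering $f_{TETAM}(\allocationSolution)\le f_{TETAM}(\allocation')$ with $f_{TBO}(\allocationSolution)=0$ and $f_{TBO}(\allocation')\le 1$ to obtain the bound. A minor merit of your write-up is that it states explicitly two facts the paper uses only implicitly, namely the monotonicity of the trait-quality maps under added traits and the convention that the search returns the first feasible node popped.
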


\begin{proof}

Since any expansion of a parent node represents the subtraction of an assignment, any given node $N$ is guaranteed to have fewer agents assigned than its parent $N_p$. This observation, when combined with the fact that removing agents can never improve the allocation quality (as removing an agent always decreases the traits of the coalition), yields

 \begin{equation}
 \label{equ:monoNSQ}
    f_{NAQ}(N) \geq f_{NAQ}(N_p),
 \end{equation}
since a higher allocation quality results in a lower NAQ heuristic value and vice versa. Consequently, we can infer that the NAQ value of all nodes in the unopened set $\mathcal{U} \subseteq \mathcal{N}$ of the \algoname~graph is greater than or equal to that of their respective predecessors in the opened set $\mathcal{O} \subseteq \mathcal{N}$.
As such, the smallest NAQ value in the unopened set must be greater than that in the opened set:

 \begin{equation}
 \label{equ:schedIncrease}
     \min_{N \in \mathcal{U}}f_{NAQ}(N) \geq \min_{N \in \mathcal{O}}f_{NAQ}(N)
 \end{equation}

Irrespective of whether the optimal allocation $\allocationOptimal$ corresponds to a node in the open or unopened set, 
the inequality in Eq.~(\ref{equ:schedIncrease}) implies that
 \begin{equation}
 \label{equ:boundOpt}
\performanceFunctionSet(\allocationOptimal) \leq \max_{N \in \mathcal{O}} \performanceFunctionSet(\allocation_N)
 \end{equation}
where $\allocation_N$ denotes the allocation of a given node $N$. We use $N'$ to denote the node from the open set with the maximum total allocation quality. Namely, $N' = \argmax_{N \in \mathcal{O}}\performanceFunctionSet(\allocation_N)$.
 
Since we require any valid solution of \algoname{} to respect the time budget, the solution node ($\solutionNode$) will have a zero TBO value. 
As such, the total TETAM heuristic defined in Eq.~(\ref{equ:TETAM}) for the solution node is given by
\begin{equation}
 \label{equ:tetamsol}
      f_{TETAM}(\solutionNode) = (1- \alpha) f_{NAQ}(\solutionNode) 
 \end{equation}
Since \algoname{} performs greedy best-first search, the TETAM value of the solution node is guaranteed to be less than or equal to all nodes in the open set:
 \begin{equation}
 \label{equ:soldef_queue}
    f_{TETAM}(\solutionNode) \leq f_{TETAM}(N), \\ \forall N \in \mathcal{O} 
 \end{equation}
Expanding the definition of TETAM and using Eq.~(\ref{equ:tetamsol}) yields
 \begin{equation}
 \label{equ:simplify}
    \begin{aligned}
    & (1-\alpha)\frac{ \performanceFunctionSet(\allocation_\text{root}) - \performanceFunctionSet(\allocationSolution)}
    {\performanceFunctionSet(\allocation_\text{root})- \performanceFunctionSet(\allocation_\text{null})}
    \leq \\
    & \alpha  f_{TBO}(N) + (1-\alpha)\frac{\performanceFunctionSet(\allocation_\text{root}) - \performanceFunctionSet(\allocation_{N})}
    {\performanceFunctionSet(\allocation_\text{root}) - \performanceFunctionSet(\allocation_\text{null})}, \forall N \in \mathcal{O}  
    \end{aligned}
 \end{equation}
Using the bound in Eq.~(\ref{equ:boundOpt}), and the fact that $ f_{TBO}(\cdot) \leq 1$, we rewrite the RHS of above equation as
 \begin{equation}
    \label{equ:simplify_2}
    \begin{aligned}
            & (1-\alpha)\frac{\performanceFunctionSet(\allocation_\text{root}) -  \performanceFunctionSet(\allocationSolution)}
            {\performanceFunctionSet(\allocation_\text{root}) - \performanceFunctionSet(\allocation_\text{null})}
            \leq  \\
            & \alpha + (1-\alpha)\frac{\performanceFunctionSet(\allocation_\text{root}) - \performanceFunctionSet(\allocationOptimal)}
            {\performanceFunctionSet(\allocation_\text{root}) - \performanceFunctionSet(\allocation_\text{null})
            },  \forall N \in \mathcal{O},  
    \end{aligned}
 \end{equation}
After rearranging and canceling equivalent terms on both sides, we get
 \begin{equation}
 \label{equ:simplify_3}
 \begin{aligned}
    & (1-\alpha)(\performanceFunctionSet(\allocation_\text{root}) - \performanceFunctionSet(\allocationSolution))
    \leq \\
    & \alpha (\performanceFunctionSet(\allocation_\text{root}) - \performanceFunctionSet(\allocation_\text{null})) + (1-\alpha) (\performanceFunctionSet(\allocation_\text{root})) - \performanceFunctionSet(\allocationOptimal))
\end{aligned}
 \end{equation} 
Finally, rearranging the terms yields the bound in Eq.~(\ref{equ:boundsPre}).
\end{proof}

Note that the above theorem provides sensible bounds on the difference in allocation quality between the optimal solution and \algoname' solution only when $0 \leq \alpha < 0.5$. When $\alpha \geq 0.5$, the bound loses significance as it grows beyond the maximum difference in quality between $\performanceFunctionSet(\allocation_\text{root})$ and $\performanceFunctionSet(\allocation_\text{null})$. In other words, Eq.~\ref{equ:boundsPre} holds trivially when $\alpha \geq 0.5$, since $\performanceFunctionSet(\allocation_\text{null})$ has the worst quality metric of all allocations and $\performanceFunctionSet(\allocation_\text{root})$ has the best one, meaning a bound greater than $\performanceFunctionSet(\allocation_\text{root}) - \performanceFunctionSet(\allocation_\text{null})$ would not be significant.

The bound in Eq.~(\ref{equ:boundsPre}) can be tightened after the execution to facilitate post-hoc analyses. Specifically, instead of bounding $f_{TBO}(N), \forall N \in \mathcal{O}$ by 1, 
we can exactly compute $f_{TBO}(N')$ for $N'$, which is the node yielding the best allocation quality.
Then, following similar algebraic manipulations as above yields a tighter bound on the suboptimality gap:
\begin{equation*}
    \label{equ:boundsPost}
     \performanceFunctionSet(\allocationOptimal)- \performanceFunctionSet(\allocationSolution) \leq \frac{\alpha}{1-\alpha} (\performanceFunctionSet(\allocation_\text{root}) - \performanceFunctionSet(\allocation_\text{null})) f_{TBO}(N’)
\end{equation*}

\section{Evaluation}
\label{sec:evalutation}

We evaluated \algoname{} using three sets of experiments. The first set of experiments evaluated \algoname{} in a simulated emergency response domain~\cite{Kitano1999,Bechon2014,Messing2020,Whitbrook2015,Zhao2016} and compared its performance against a state-of-the-art trait-based task allocation method. The second set of experiments examined if the theoretical bounds on resource allocation suboptimality held in practice. In the third set of experiments, we evaluated the active learning strategy of \algoname{} on a real-world dataset involving the American Football video game NFL Madden. 

\subsection{Coordinated Emergency Response}
\label{subsec:ITAGS_comparison}

We evaluate \algoname' performance in a simulated emergency response domain. Robots with heterogeneous traits are required to collaborate on tasks such as extinguishing fire and rescuing wounded survivors (see \cite{Neville2021,Messing2022} for more details). To accommodate comparisons against a SOTA trait-based task allocation method, we assume that the trait quality maps are known and represented by a linear weighted sum of coalitions' collective traits.

\begin{figure}[t]
\begin{center}
    \includegraphics[width=0.65\columnwidth, trim=0 0 0 0, clip]{./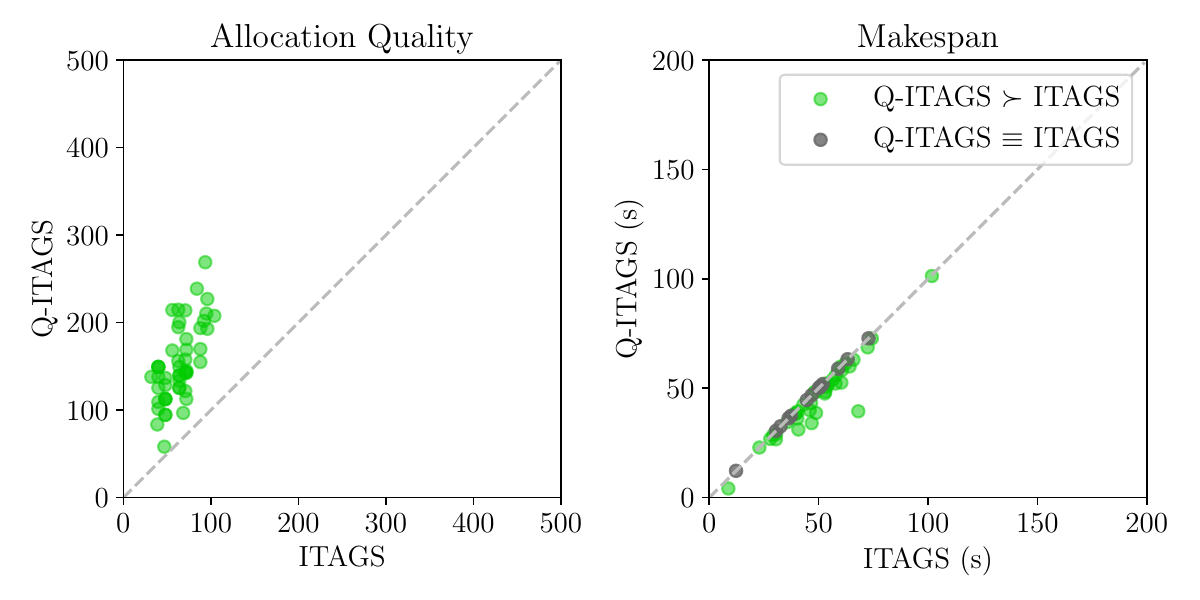}  
    \includegraphics[width=0.34\columnwidth, trim=0 0 0 0, clip]{./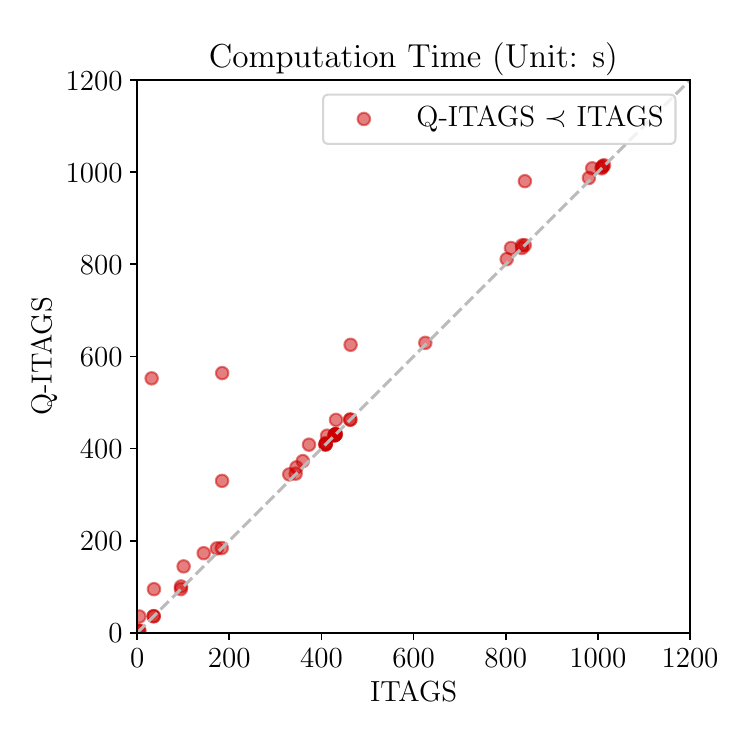}  
\end{center}
\caption{
\small{
Comparison of \algoname' allocation quality, makespan, and computation time against ITAGS. 
\algoname~consistently generates solutions of superior quality (left) while simultaneously ensuring that its makespan is better than or equal to that of ITAGS (middle). Green dots indicate \algoname{} performs better than ITAGS and grey dots indicate \algoname{} and ITAGS perform equally. Larger allocation quality and smaller makespan are desirable. Red dots indicate that \algoname{}' benefits over ITAGS comes at the cost of slightly worse computation time (right).
}
}
\label{fig:comp}
\end{figure}

We evaluated \algoname' performance on 50 problem instances, and contrast it with ITAGS~\cite{Neville2021,neville-2023-d-itags}. We chose to compare our approach with ITAGS as i) it has been shown to perform better than other state-of-the-art time-extended task allocation algorithms, and ii) ITAGS is also a trait-based approach that shares similar assumptions allowing for a more reasonable comparison. A key difference is that ITAGS minimizes makespan subject to allocation requirements, while \algoname{} optimizes allocation performance subject to scheduling requirements. We first computed a solution to each problem using ITAGS and evaluated that solution's allocation quality (using ground-truth trait-quality maps) and makespan. 
We then evaluated \algoname' ability to improve ITAGS' allocation quality when constraining \algoname' time budget to be equal to ITAGS' makespan. 

As shown in Fig. \ref{fig:comp} (\textit{left} and \textit{middle}), \algoname{} consistently improves the allocation quality for all problems, while ensuring that its makespan is either equal to or lower than that of ITAGS. 
This is explained by the fact that \algoname{} can ensure that the makespan doesn't exceed that of ITAGS and yet search for a better solution that maximizes task performance. Indeed, this improved performance comes at the cost of a slight increase in computation time as seen in Fig.~\ref{fig:comp} (\textit{right}). Together, these results demonstrate that \algoname{} outperforms ITAGS in terms of task performance while simultaneously respecting time budget.


\begin{figure}[h]
\begin{center}
    \includegraphics[width=0.55\columnwidth, trim=0 0.6cm 0 0, clip]{./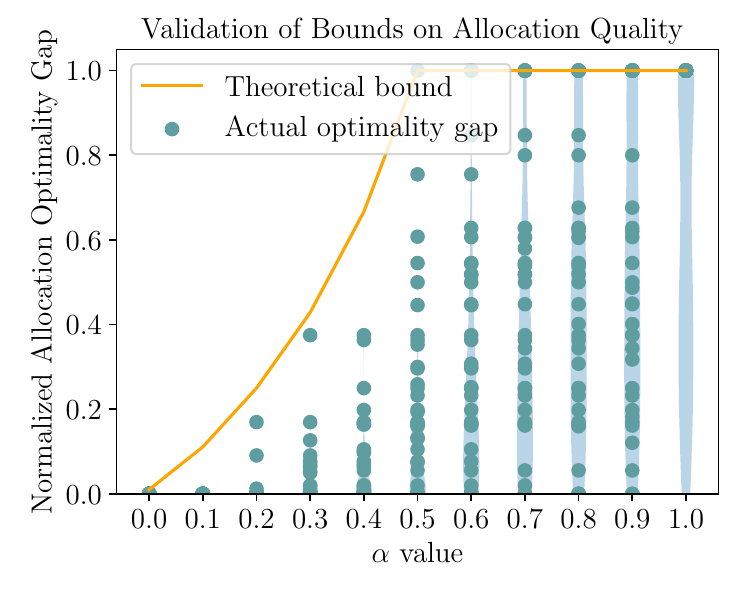}
\end{center}
\caption{
\small{
The theoretical bound consistently holds for varying values of $\alpha$. 
A value of $0$ for a normalized optimality gap represents an optimal allocation, and a value of $1$ represents the worst possible allocation seen within the experiments.
}}
\label{fig:proof} 
\end{figure}

\subsection{Validating Guarantees on Optimality Gaps}
\label{subsec:validation_makespan}

In our second experiment, we empirically examined the validity of our theoretical guarantees on allocation quality from Sec. \ref{sec:theorems}. To this end, we varied $\alpha$ between $[0,1]$ in increments of $0.1$, and solved each of the 50 problem instances from the previous section for each value of $\alpha$. For every combination of problem and $\alpha$ value, we computed the actual normalized optimality gap and the corresponding normalized theoretical bound in Eq.~(\ref{equ:boundsPre}). We found that the optimality gaps consistently respect the theoretical bound across all values of $\alpha$ (see Fig. \ref{fig:proof}). As we expected, the extreme values $\alpha=0$ (ignoring TBO) and $\alpha=1$ (ignoring NAQ) respectively result in the best and worst allocation quality.


\subsection{Learning Trait Requirements for American Football}
\label{subsec:nfl_madden}

\begin{figure}[t]
\begin{center}
    \includegraphics[width=1.0\columnwidth, trim=0 0 0 2cm, clip]{./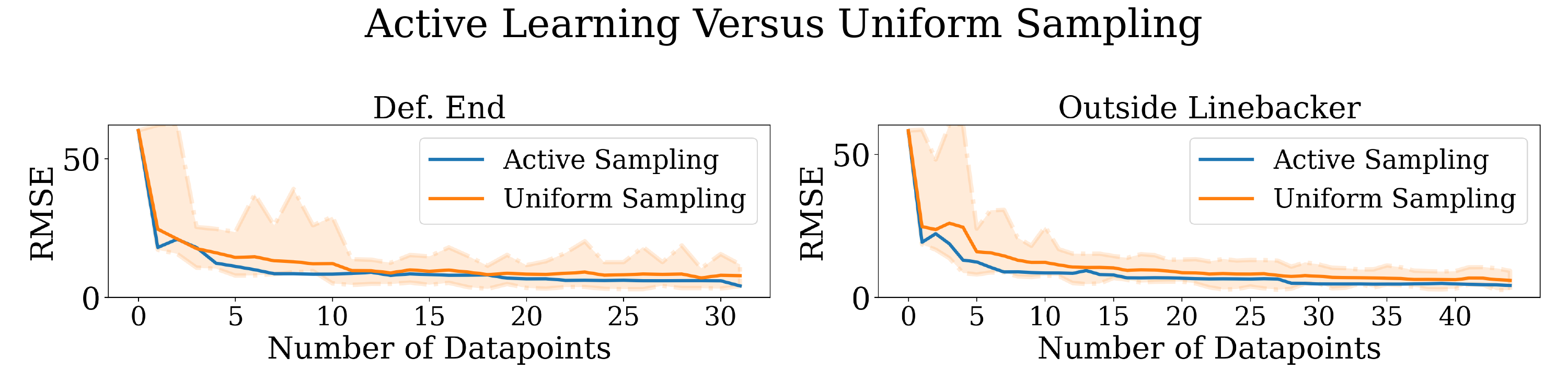}  \\
    \includegraphics[width=1.0\columnwidth, trim=0 0 0 0, clip]{./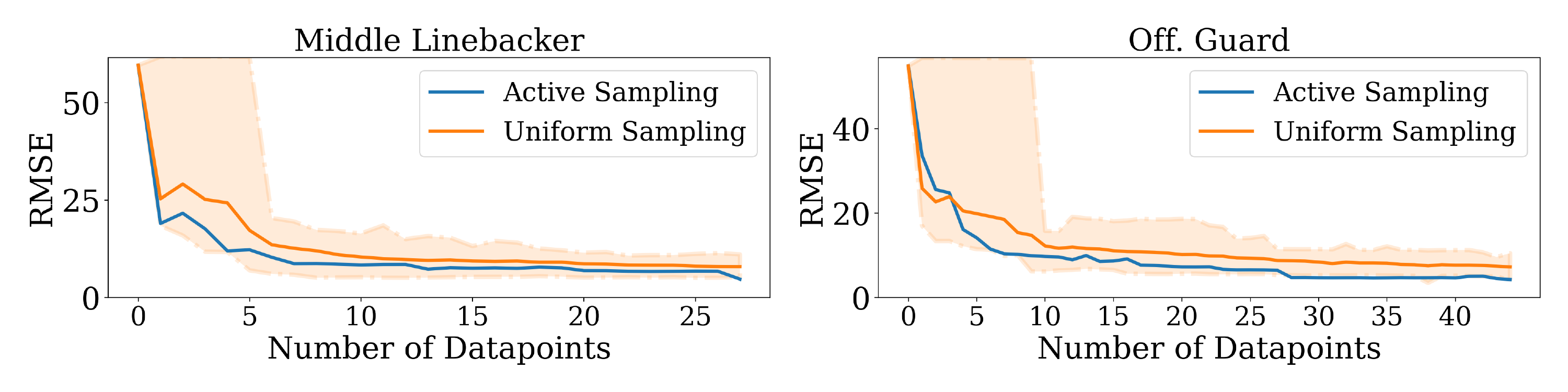}  \\
    \includegraphics[width=1.0\columnwidth, trim=0 0 0 0, clip]{./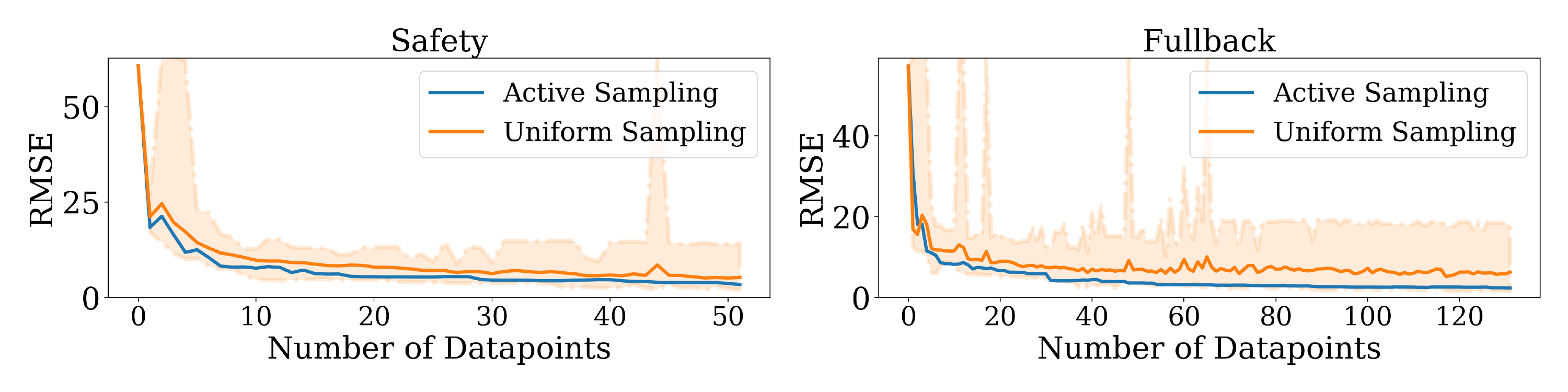} 
\end{center}
\caption{
\small{
\algoname' active learning module consistently outperforms a uniform-sampling baseline in terms of sample efficiency and accuracy when learning trait-quality maps from real-world data. We show prediction errors (RMSE) as a function of number of training samples.
Solid orange curve is the average error of the baseline across 20 random seeds, and the shaded region denotes the maximum and minimum errors. 
}}
\label{fig:madden} 
\end{figure}

Finally, we evaluated our active learning framework's ability to efficiently learn trait-quality maps from real-world data. 
Specifically, we tested the proposed active learning framework using a NFL Madden dataset, in which players are represented with an array of 53 traits, such as toughness and speed, and are also given an effectiveness (i.e., quality) score for a particular position~\cite{Staff_2023}. The ground-truth quality score is given by a weighted sum of the players' traits and the weights used depend on the specific position under consideration. Our active learning module's goal is to efficiently learn the trait-quality maps that predict a given player's quality score for a particular position, given the player's traits. 
 
 We compared \algoname{}' active learning module to a baseline module that uniformly randomly samples training data from the dataset. To quantify and compare their effectiveness, we computed their respective root mean squared prediction error (RMSE).
We computed the performance of the uniform baseline over 20 random seeds to ensure a fair comparison. The results for six different positions are shown in Fig.~\ref{fig:madden}, with shaded regions denoting the maximum and minimum prediction errors of the uniform-sampling baseline.

 As seen in Fig.~\ref{fig:madden}, \algoname{} is consistently more sample efficient than the uniform-sampling baseline, and is capable of learning the trait-quality maps with fewer training samples. In particular, \algoname' performance is highly consistent while the performance of the uniform sampler vary greatly (as indicated by the large variations).  This suggests that \algoname{}' active learning module can select the most informative data points consistently, leading to more efficient and effective learning of trait-quality maps. Given its sample efficiency, \algoname{} can be used in domains in which simulating experiments are computationally expensive or manually labeling data is cumbersome.

\section{Conclusion}
We have formulated a new class of spatio-temporal task allocation problems for heterogeneous robot teams, involving optimization of task performance while respecting a time budget. We also developed a novel algorithm named \algoname{} to solve such problems. Our approach explicitly models coalitions' performance on tasks as continuous trait-quality maps that relate collective capabilities and task performance, and our experiments demonstrate that \algoname{} can learn these maps efficiently from real-world data.
Detailed experiments on an emergency response domain show that \algoname{} can generate solutions of higher quality than a SOTA approach, while still respecting the time budget. We also derived an upper bound on the suboptimality of \algoname{} and demonstrated that it holds in practice. 
Despite all its benefits, \algoname{} has some limitations. For instance, its improved performance over baseline approaches in terms of allocation quality and makespan comes at the cost of a slight increase in computation time (see Fig.~\ref{fig:comp}). Further, \algoname{} doesn't exploit the fact that quality functions are likely to be submodular since adding more capabilities is bound to have diminishing returns beyond a point~\cite{liu2019submodular}. 
Finally, \algoname{} could be extended to handle dynamic scenarios involving robot failures or task changes~\cite{neville-2023-d-itags}. 

%
%

\bibliographystyle{ieeetr}
\bibliography{references.bib}

\begin{thebibliography}{10}

\bibitem{Roldan2016}
J.~J. Rold{\'{a}}n, P.~Garcia-Aunon, M.~Garz{\'{o}}n, J.~de~Le{\'{o}}n, J.~del Cerro, and A.~Barrientos, ``{Heterogeneous multi-robot system for mapping environmental variables of greenhouses},'' {\em Sensors}, vol.~16, no.~7, 2016.

\bibitem{McCook2007}
C.~J. McCook and J.~M. Esposito, ``{Flocking for heterogeneous robot swarms: A military convoy scenario},'' in {\em Proceedings of the Annual Southeastern Symposium on System Theory}, pp.~26--31, 2007.

\bibitem{Stroupe2005}
A.~W. Stroupe, T.~Huntsberger, B.~Kennedy, H.~Aghazarian, E.~T. Baumgartner, A.~Ganino, M.~Garrett, A.~Okon, M.~Robinson, and J.~A. Townsend, ``{Heterogeneous robotic systems for assembly and servicing},'' {\em European Space Agency, ESA SP}, no.~603, pp.~625--631, 2005.

\bibitem{Baras2019}
N.~Baras and M.~Dasygenis, ``{An algorithm for routing heterogeneous vehicles in robotized warehouses},'' in {\em 5th Panhellenic Conference on Electronics and Telecommunications}, 2019.

\bibitem{Ravichandar2020}
H.~Ravichandar, K.~Shaw, and S.~Chernova, ``{STRATA: A unified framework for task assignments in large teams of heterogeneous agents},'' {\em Autonomous Agents and Multi-Agent Systems}, vol.~34, no.~38, 2020.

\bibitem{Matos2021}
D.~Matos, P.~Costa, J.~Lima, and A.~Valente, ``{Multiple Mobile Robots Scheduling Based on Simulated Annealing Algorithm},'' in {\em International Conference on Optimization, Learning Algorithms and Applications}, 2021.

\bibitem{Neville2021}
G.~Neville, A.~Messing, H.~Ravichandar, S.~Hutchinson, and S.~Chernova, ``{An Interleaved Approach to Trait-Based Task Allocation and Scheduling},'' in {\em International Conference on Intellifent Robots and Systems (IROS)}, 2021.

\bibitem{Messing2022}
A.~Messing, G.~Neville, S.~Chernova, S.~Hutchinson, and H.~Ravichandar, ``{GRSTAPS: Graphically Recursive Simultaneous Task Allocation, Planning, and Scheduling},'' {\em International Journal of Robotics Research}, 2022.

\bibitem{Giordani2013}
S.~Giordani, M.~Lujak, and F.~Martinelli, ``{A distributed multi-agent production planning and scheduling framework for mobile robots},'' {\em Computers and Industrial Engineering}, vol.~64, no.~1, 2013.

\bibitem{Krizmancic2020}
M.~Krizmancic, B.~Arbanas, T.~Petrovic, F.~Petric, and S.~Bogdan, ``{Cooperative Aerial-Ground Multi-Robot System for Automated Construction Tasks},'' {\em IEEE Robotics and Automation Letters}, vol.~5, no.~2, pp.~798--805, 2020.

\bibitem{gombolay2016}
M.~Gombolay, R.~Wilcox, and J.~Shah, ``{Fast Scheduling of Multi-Robot Teams with Temporospatial Constraints},'' 2016.

\bibitem{Schillinger2018}
P.~Schillinger, M.~Buerger, and D.~Dimarogonas, ``{Improving Multi-Robot Behavior Using Learning-Based Receding Horizon Task Allocation},'' {\em RSS}, 2018.

\bibitem{fu2022robust}
B.~Fu, W.~Smith, D.~M. Rizzo, M.~Castanier, M.~Ghaffari, and K.~Barton, ``Robust task scheduling for heterogeneous robot teams under capability uncertainty,'' {\em IEEE Transactions on Robotics}, vol.~39, no.~2, 2022.

\bibitem{srikanthan2022resource}
A.~Srikanthan and H.~Ravichandar, ``Resource-aware adaptation of heterogeneous strategies for coalition formation,'' in {\em International Conference on Autonomous Agents and Multiagent Systems (AAMAS)}, pp.~1732--1734, 2022.

\bibitem{Prorok2017}
A.~Prorok, M.~A. Hsieh, and V.~Kumar, ``{The Impact of Diversity on Optimal Control Policies for Heterogeneous Robot Swarms},'' {\em IEEE Transactions on Robotics}, 2017.

\bibitem{Mayya2021}
S.~Mayya, D.~S. D'Antonio, D.~Saldana, and V.~Kumar, ``{Resilient Task Allocation in Heterogeneous Multi-Robot Systems},'' {\em Robotics and Automation Letters}, 2021.

\bibitem{rieskamp2003people}
J.~Rieskamp, J.~R. Busemeyer, and T.~Laine, ``How do people learn to allocate resources? comparing two learning theories.,'' {\em Journal of Experimental Psychology: Learning, Memory, and Cognition}, vol.~29, no.~6, p.~1066, 2003.

\bibitem{rieskamp2006ssl}
J.~Rieskamp and P.~E. Otto, ``{SSL}: a theory of how people learn to select strategies.,'' {\em Journal of experimental psychology: General}, vol.~135, no.~2, p.~207, 2006.

\bibitem{neville-2023-d-itags}
G.~Neville, S.~Chernova, and H.~Ravichandar, ``{D-ITAGS}: a dynamic interleaved approach to resilient task allocation, scheduling, and motion planning,'' {\em IEEE Robotics and Automation Letters}, vol.~8, no.~2, pp.~1037--1044, 2023.

\bibitem{Gerkey2004}
B.~P. Gerkey and M.~J. Matari{\'{c}}, ``{A formal analysis and taxonomy of task allocation in multi-robot systems},'' {\em International Journal of Robotics Research}, 2004.

\bibitem{Nunes2017}
E.~Nunes, M.~Manner, H.~Mitiche, and M.~Gini, ``{A taxonomy for task allocation problems with temporal and ordering constraints},'' {\em Robotics and Autonomous Systems}, vol.~90, 2017.

\bibitem{quinton2023market}
F.~Quinton, C.~Grand, and C.~Lesire, ``Market approaches to the multi-robot task allocation problem: a survey,'' {\em Journal of Intelligent \& Robotic Systems}, 2023.

\bibitem{Korsah2012}
G.~A. Korsah, B.~Kannan, B.~Browning, A.~Stentz, and M.~B. Dias, ``{xBots: An approach to generating and executing optimal multi-robot plans with cross-schedule dependencies},'' in {\em International Conference on Robotics and Automation}, 2012.

\bibitem{Guerrero2017}
J.~Guerrero, G.~Oliver, and O.~Valero, ``{Multi-robot coalitions formation with deadlines: Complexity analysis and solutions},'' {\em PLoS ONE}, vol.~12, no.~1, 2017.

\bibitem{chakraa2023optimization}
H.~Chakraa, F.~Gu{\'e}rin, E.~Leclercq, and D.~Lefebvre, ``Optimization techniques for multi-robot task allocation problems: Review on the state-of-the-art,'' {\em Robotics and Autonomous Systems}, p.~104492, 2023.

\bibitem{Koes2006}
M.~Koes, I.~Nourbakhsh, and K.~Sycara, ``{Constraint optimization coordination architecture for search and rescue robotics},'' in {\em IEEE ICRA}, 2006.

\bibitem{Neville2020}
G.~Neville, H.~Ravichandar, K.~Shaw, and S.~Chernova, ``{Approximated dynamic trait models for heterogeneous multi-robot teams},'' {\em IEEE International Conference on Intelligent Robots and Systems}, 2020.

\bibitem{fu2022learning}
B.~Fu, W.~Smith, D.~Rizzo, M.~Castanier, M.~Ghaffari, and K.~Barton, ``Learning task requirements and agent capabilities for multi-agent task allocation,'' {\em arXiv preprint arXiv:2211.03286}, 2022.

\bibitem{Singh-RSS-23}
S.~Singh, A.~Srikanthan, V.~Mallampati, and H.~Ravichandar, ``{Concurrent Constrained Optimization of Unknown Rewards for Multi-Robot Task Allocation},'' in {\em Robotics: Science and Systems}, 2023.

\bibitem{10341837}
J.~Park, A.~Messing, H.~Ravichandar, and S.~Hutchinson, ``Risk-tolerant task allocation and scheduling in heterogeneous multi-robot teams,'' in {\em IEEE/RSJ International Conference on Intelligent Robots and Systems (IROS)}, 2023.

\bibitem{Menell2017}
D.~Hadfield-Menell, S.~Milli, P.~Abbeel, S.~Russell, and A.~D. Dragan, ``{Inverse Reward Design},'' in {\em Conference on Neural Information Processing System}, 2017.

\bibitem{Mallozzi2018}
P.~Mallozzi, R.~Pardo, V.~Duplessis, P.~Pelliccione, and G.~Schneider, ``{MoVEMo: A structured approach for engineering reward functions},'' in {\em IEEE International Conference on Robotic Computing 2018}, 2018.

\bibitem{Weld1994}
D.~S. Weld, ``{An Introduction to Least Commitment Planning},'' {\em AI magazine}, vol.~15, no.~4, pp.~27--27, 1994.

\bibitem{Bhargava2019}
N.~Bhargava and B.~Williams, ``{Multiagent disjunctive temporal networks},'' in {\em International Conference on Autonomous Agents and Multiagent Systems}, 2019.

\bibitem{Kitano1999}
H.~Kitano, S.~Tadokoro, I.~Noda, H.~Matsubara, T.~Takahashi, A.~Shinjou, and S.~Shimada, ``{RoboCup rescue: search and rescue in large-scale disasters as a domain for autonomous agents research},'' {\em Proceedings of the IEEE International Conference on Systems, Man and Cybernetics}, vol.~6, pp.~739--743, 1999.

\bibitem{Bechon2014}
P.~Bechon, M.~Barbier, G.~Infantes, C.~Lesire, and V.~Vidal, ``{HiPOP: Hierarchical Partial-Order Planning},'' {\em STAIRS}, pp.~51--60, 2014.

\bibitem{Messing2020}
A.~Messing and S.~Hutchinson, ``{Forward Chaining Hierarchical Partial-Order Planning},'' {\em International Workshop on the Algorithmic Foundations of Robotics}, 2020.

\bibitem{Whitbrook2015}
A.~Whitbrook, Q.~Meng, and P.~W. Chung, ``{A novel distributed scheduling algorithm for time-critical multi-agent systems},'' in {\em IEEE International Conference on Intelligent Robots and Systems}, vol.~2015-Decem, pp.~6451--6458, 2015.

\bibitem{Zhao2016}
W.~Zhao, Q.~Meng, and P.~W. Chung, ``{A Heuristic Distributed Task Allocation Method (PIA)},'' {\em IEEE Transactions on Cybernetics}, vol.~46, pp.~902--915, 4 2016.

\bibitem{Staff_2023}
M.~Staff, ``Madden 19 player ratings.'' \url{https://www.maddenguides.com/madden-19-player-ratings/}, May 2023.

\bibitem{liu2019submodular}
J.~Liu and R.~K. Williams, ``Submodular optimization for coupled task allocation and intermittent deployment problems,'' {\em IEEE Robotics and Automation Letters}, vol.~4, no.~4, pp.~3169--3176, 2019.

\end{thebibliography}

\end{document}